\font\msbm=msbm10 at 10pt
\newcommand{\ZZ}{\mbox{\msbm Z}}
\newcommand{\CC}{\mbox{\msbm C}}
\newcommand{\FF}{\mbox{\msbm F}}
\def \Z {{\ZZ}}
\def \C {{\CC}}
\def \F {{\FF}}
\def \x {{\bf x}}
\def \y {{\bf y}}
\begin{document}
%
\title{On Weak Dress Codes for Cloud Storage}

\author{\IEEEauthorblockN{Manish K. Gupta}
\IEEEauthorblockA{Laboratory of Natural Information Processing\\
Dhirubhai Ambani Institute of Information\\ and Communication Technology\\
Gandhinagar, Gujarat, 382007 India\\
Email: mankg@computer.org}
\and
\IEEEauthorblockN{Anupam Agrawal}
\IEEEauthorblockA{Infosys Ltd.\\
Pune, India\\
Email: anupam$\_$agarwal02@infosys.com}
\and
\IEEEauthorblockN{Deepak Yadav}
\IEEEauthorblockA{Sapient-Nitro\\
Gurgaon, Delhi, India\\
Email:dyadav4@sapient.com
}}


%


\maketitle

\begin{abstract}
In a distributed storage network, reliability and bandwidth optimization can be provided by regenerating codes. Recently  table based regenerating codes viz. DRESS (Distributed Replication-based Exact Simple Storage) codes has been proposed which also optimizes the disk I/O. Dress codes consists of an outer MDS code with an inner fractional repetition (FR) code with replication degree $\rho$.  Several constructions of FR codes based on regular graphs, resolvable designs and bipartite graphs are known. This paper presents a simple modular construction of FR codes. We also generalize the concept of FR codes to weak fractional repetition (WFR) codes where each node has different number of packets. We present a construction of  WFR codes based on partial regular graph. Finally we present a simple generalized ring construction of both strong and weak fractional repetition codes. 
\end{abstract}

%
\IEEEpeerreviewmaketitle
\section{Introduction}
\IEEEPARstart{C}{loud} computing has emerged as a fascinating area of computing  providing various services such as on demand videos, music, computing etc. to end users from any computing device.  Usually data of the user is stored on different data centers in a distributed fashion.  Many such commercial cloud services are available from Amazon, Google and Apple etc. \cite{amazonec2,icloud,skydrive}. 
To provide seamless end user experience  for different applications such as Gmail etc. one has to optimize  various parameters of the storage system such as reliability, scalability, storage, bandwidth, disk I/O, etc.  In such a $distributed ~storage ~system$ (DSS) data is stored on a set of nodes in cloud network, which are not reliable independently. Redundancy is added in network to increase the reliability of the system. Google File System (GFS) is one such system which uses replication of data into various chunk servers (large data chunk files) together with replication of logs and master index for reliability~\cite{Ghemawat03}.  Although this scheme is simple but inefficient in terms of utilizaion of various storage system  parameters like bandwidth. One can provide reliability to such systems  using erasure coding which uses an MDS (Maximum Distance Separable) code ~\cite{patterson} however these codes fails to optimize bandwidth, complexity and disk I/O.  To optimize these parameters  recently regenerating codes were introduced and studied in detail by many researchers ~\cite{dgwr7,survey,dress11,DBLP:journals/corr/abs-1211-1932,2013arXiv1302.0744K,XorbasVLDB}.

Consider a distributed storage system (DSS) consisting of $n$ nodes, where each node can store $\alpha$ symbols  from $\F_q$ (finite field with $q$ number of elements). A file size of $B$ symbols can be stored in such a DSS either using simply a replication (as it is done currently in many cloud storage system) of packets or by using  an MDS code such that the file  can be retrieved using erasure codes by contacting any $k$ nodes out of those $n$ nodes in network ($k < n$) ~\cite{rr10}.  When a node fails, a new code can be generated (known as \it repair \rm of a failed node).  In a simple replication system one can download it from any other  node having a copy of the data. This is good bandwidth wise and disk I/O wise but it can tolerates only few errors as compare to an erasure code system (where a replacement node can contact to $k$ nodes and extract the data of that failed node) which can tolerates more failures of nodes but it is inefficient for bandwidth and disk I/O.  To minimize the bandwidth, the concept of regenerating code was introduced by Dimakis et al. \cite{dgwr7}. In regenerating codes instead of downloading whole file $B$ (by contacting all $k$ nodes), a node can be repaired by contacting $d$ nodes ($d$ is known as the repair degree) and downloading $\beta$ packets from each of them out of $\alpha$ packets which are stored on each node.  This new generated data is functionally equivalent to lost data. Thus repair bandwidth becomes $d \beta$.  Regenerating codes over $\F_q$ are characterized by parameters: $\{ \F_q, [n, k, d], [\alpha,\beta, B] \}$.  Bandwidth can be reduced further if each node is willing to store more \cite{survey}. Thus there is a tradeoff between storage ($\alpha$) and repair bandwidth ($d \beta$).  For internet  applications one can search for minimum bandwidth repair (MBR) regenerating codes and for archival applications one can search for minimum storage repair (MSR) codes. These are two extremal points on the storage-repair bandwidth tradeoff curve \cite{5961826,survey}.  For regenerating codes it is well known that
\begin{equation}  
k \leq  d \leq (n-1). 
\label{eq:1}
\end{equation}
Cut-set bound of network coding gives the following ~\cite{shah2012distributed} relation between the parameters of regenerating codes. 
\begin{equation}
B \leq \sum_{i=0}^ {k-1}  \min \{  \alpha ,( d -  i) \beta \}. 
\label{eq:2}
\end{equation}
Clearly we want to minimize both $\alpha$ as well as $\beta$ since minimizing $\alpha$ leads to minimum storage solution and minimizing $\beta$ leads to a solution which minimizes repair bandwidth.  There is a trade off between choices of $\alpha$ and $\beta$ as shown in \cite{shah2012distributed}. For regenerating codes to be optimal, equality holds in (\ref{eq:2})
and for the MBR (Minimum Bandwidth Regeneration) point, the parameter of the code must satisfy (\ref{eq:4}) (see  \cite{shah2012distributed}).
\begin{equation} 
B = \left(  kd - { k(k-1) \choose 2} \right)  \beta .    
\label {eq:4}
\end{equation}
For $ \beta = 1 $, the storage capacity (the limit on the maximum file size that can be delivered to any user contacting any $k$ out of $n$ nodes) in a DSS is given by \cite{rnvr9a}
\begin{equation}  C_{MBR}(n,k,d) = kd- {k \choose 2} \label {eq:5}. \end {equation} 
Constructions of regenerating codes for $\beta =1$ achieving storage capacity  (\ref{eq:5}) are of special interest. The problem of regeneration becomes harder if one wants to have an exact repair of the failed node. 
Rashmi et al. presented an explicit construction of exact-Minimum Bandwidth Repair (MBR) codes for the parameter set $[n, k = n-2, d = n-1]$ by fully connected graph \cite{rnvr9a}. An explicit MBR code is also given for all feasible values of the system parameters $[ n, k, d ]$ and MSR codes for all $2k-2 \leq d \leq n-1$ based on a common product matrix framework \cite{rnvr9a}.  
El Rouayheb et al. \cite{rr10} has shown the construction of optimal regenerating codes (called Fractional Repetition Codes) by using table-based repair model, which is basically a generalization of the codes given in \cite{rnvr9a}. El Rouayheb et al. constructed the regenerating codes for single node failure based on regular graph and multiple node failure based on Steiner systems. These codes are fast and low-complexity repair codes. It simply reads only one of its stored packet and forward it to the replacement node with no additional processing. This property is referred as \it uncoded repair. \rm  By combining Fractional Repetition code with an MDS code a concept of DRESS (Distributed Replication-based Exact Simple Storage) codes has been introduced in \cite{dress11}. It was an open problem to construct FR codes (and hence dress codes) beyond Steiner systems. Several constructions of Fractional Repetition Codes (and hence dress codes) are known based on bipartite graph \cite{DBLP:journals/corr/abs-1102-3493}, resolvable designs \cite{DBLP:journals/corr/abs-1210-2110} and regular graphs \cite{rr10,Wangwang12}.
In this paper, we present the extension of the table based method presented in \cite{rr10} for single node failure. We call it weak fractional repetition codes (and hence weak dress codes). The construction in this paper gives the regenerating code for the values of $n$ and $d$ for which regular graph does not exist. We also present some general constructions of weak dress codes.

The rest of the paper is organized as follows. In Section $2,$ we summarize  popular FR code generation given by El Rouayheb et al.  \cite{rr10} for single node failure. In Section $3,$ we provide a simple construction of FR code for  $\rho = d$ and $n=\theta$.  A construction of weak fractional repetition code from partial regular graph is given in Section $4$.  A ring construction is given in Section $5$ which provides both weak and strong fractional repetition codes. Finally Section $6$ concludes with general remarks.
\section{Previous Work}
 
El Rouayheb et al. \cite{rr10} presented the table based regenerating codes to reduce complexity. This construction consists of containing an outer MDS code with an inner fractional repetition code with repetition degree $ \rho $ that can tolerate up to $ \rho-1 $ nodes failing together. In this section, we summarize the results of \cite{rr10} required for our purpose and also give an example of FR code based on regular graph.
\newtheorem{mydef}{Definition}
\begin{mydef}(Fractional Repetition Code) : A Fractional Repetition (FR) code $\C,$ with repetition degree $\rho$, for an $(n,k,d)$ DSS, is a collection $\C$ of $n$ subsets $U_1,U_2, \ldots ,U_n$ of a set $\Omega = \{1,\ldots,\theta \}$, each having size $d$, i.e, $|U_i|=d$, satisfying the condition that each element of $\Omega$ belongs to exactly $\rho$ sets in the collection.
\label{defFRR}
\end{mydef}
As MDS code exists for any parameters, the challenging part is to get an FR code.  In \cite{rr10}, FR codes of repetition degree $2$ are designed with an uncoded repair based on regular graph which can tolerate at most one node failure. For all feasible values of $n$ and $d$, these codes achieves the capacity $ C_{MBR} $. Either of $n$ and $d$ must be even to get these codes. In fact this is the necessary and sufficient condition for constructing codes from this method \cite{rr10}.
\newtheorem{myprop}{Proposition} 
\begin{myprop} The parameter $\theta$ in Def. \ref{defFRR} of an FR code of degree $\rho$ for an $(n,k,d)$ DSS is given by, 
\begin {equation} \theta \rho = nd. \label {eq:rh} \end {equation} 
\label{prop1}
\end{myprop}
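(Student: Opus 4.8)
The plan is to prove the identity $\theta\rho = nd$ by a double-counting argument applied to the incidences between the ground set $\Omega$ and the collection of subsets. Concretely, I would consider the set of incident pairs
\begin{equation}
I = \{ (j,i) : 1 \le j \le \theta,\ 1 \le i \le n,\ j \in U_i \},
\end{equation}
and then compute its cardinality $|I|$ in two different ways, once by grouping according to the subset index $i$ and once by grouping according to the element $j$. Equating the two counts will immediately yield the desired relation.

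First I would count $|I|$ by summing over the subsets. By the definition of an FR code each $U_i$ has exactly $d$ elements, i.e. $|U_i| = d$ for every $i$, so the number of pairs whose second coordinate equals $i$ is exactly $d$. Summing over the $n$ subsets gives $|I| = \sum_{i=1}^{n} |U_i| = nd$. Next I would count $|I|$ the other way, by summing over the elements of $\Omega$. The defining condition of the code is precisely that each element $j \in \Omega$ belongs to exactly $\rho$ of the subsets, so the number of pairs whose first coordinate equals $j$ is exactly $\rho$. Summing over the $\theta$ elements gives $|I| = \sum_{j=1}^{\theta} \rho = \theta\rho$. Since both expressions count the same set $I$, we conclude $\theta\rho = nd$. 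Equivalently, one may record the incidences in a $\theta \times n$ incidence matrix with a $1$ in position $(j,i)$ exactly when $j \in U_i$, and observe that summing all entries by columns yields $nd$ while summing by rows yields $\theta\rho$.

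I do not anticipate any genuine obstacle here: the entire content of the argument lies in recognizing that both hypotheses in Definition~\ref{defFRR} are \emph{exact} equalities rather than inequalities. It is the condition $|U_i| = d$ (with no set smaller or larger) that makes the column count exactly $nd$, and the condition that each element lies in \emph{exactly} $\rho$ sets (a uniform multiplicity) that makes the row count exactly $\theta\rho$; were either of these merely an upper or lower bound, the conclusion would weaken to an inequality. Thus the only point requiring care is to invoke the two structural constraints of the FR code at the right moment, after which equating the two evaluations of $|I|$ closes the proof.
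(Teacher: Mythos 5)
Your double-counting argument is correct: counting the incidence pairs $(j,i)$ with $j \in U_i$ once by subsets gives $\sum_{i=1}^{n}|U_i| = nd$ and once by elements gives $\theta\rho$, exactly because Definition~\ref{defFRR} imposes the two equalities $|U_i|=d$ and uniform multiplicity $\rho$. The paper itself states Proposition~\ref{prop1} without proof, deferring to the cited references where this same incidence-counting argument is the standard justification, so your proposal matches the intended reasoning and simply supplies the detail the paper omits.
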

Proposition \ref{prop1} gives a necessary and sufficient condition for existence of FR code \cite{DBLP:journals/corr/abs-1201-3547,rr10}. The following popular results for constructing FR codes are well known from \cite{rr10,dress11}.
\newtheorem{mythm}{Theorem} 
\begin{mythm}
A projective plane of order $m$ gives dress codes with $n=m^2+m+1$ and $\rho = d = m+1$ by taking points as packets and lines as storage nodes.
\end{mythm}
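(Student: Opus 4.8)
The plan is to verify directly that the point--line incidence structure of a projective plane of order $m$ satisfies every clause of Definition~\ref{defFRR}, so that simply reading off the incidences produces a bona fide FR code; the outer MDS layer is then free, since (as already noted in the excerpt) an MDS code exists for any parameters, and combining it with the inner FR code gives the dress code.

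First I would fix the dictionary. Take $\Omega$ to be the set of points of the plane, so that $\theta = |\Omega| = m^2+m+1$, and label the lines $\ell_1,\ldots,\ell_n$, where the number of lines is also $n = m^2+m+1$. For each line $\ell_i$ define $U_i \subseteq \Omega$ to be the set of points lying on $\ell_i$. This is exactly the ``points as packets, lines as storage nodes'' assignment named in the statement.

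Next I would check the two size/multiplicity conditions. Because every line of a projective plane of order $m$ is incident with exactly $m+1$ points, each $U_i$ has $|U_i| = m+1$, so we set $d = m+1$. Dually, each point lies on exactly $m+1$ lines; hence every element of $\Omega$ belongs to exactly $m+1$ of the sets $U_1,\ldots,U_n$, which is precisely the requirement that each element occur in $\rho = m+1$ sets. Thus $\rho = d = m+1$ and $n = \theta = m^2+m+1$, as claimed, and one sees immediately that the counting identity of Proposition~\ref{prop1}, namely $\theta\rho = nd$, collapses to the tautology $(m^2+m+1)(m+1) = (m^2+m+1)(m+1)$, confirming consistency.

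The only step that is not a one-line reading of the axioms is the dual-regularity invoked above --- that each point lies on exactly $m+1$ lines. If ``projective plane of order $m$'' is taken to already supply this parameter, the argument is complete; otherwise I would recover it from the incidence axioms by a standard double-count of the incident $(\text{point},\text{line})$ pairs, using that two distinct points determine a unique line. This point--line regularity is the crux of the verification; once it is in hand, combining the inner FR code $\C$ with any outer $[n,k]$ MDS code yields the asserted dress code.
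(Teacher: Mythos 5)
Your proposal is correct: the paper states this theorem without proof (it is cited as well known from \cite{rr10,dress11}), and your direct verification --- points as $\Omega$, lines as the sets $U_i$, with $|U_i|=m+1$ and each point lying on exactly $m+1$ lines, so that Definition~\ref{defFRR} and the counting identity $\theta\rho=nd$ of Proposition~\ref{prop1} hold, after which any outer MDS code completes the dress code --- is precisely the standard argument intended there. Your fallback for the one nontrivial step is also sound: fixing a point $P$, each of the other $m^2+m$ points lies with $P$ on a unique line, and each line through $P$ carries exactly $m$ such points, giving $(m^2+m)/m = m+1$ lines through $P$.
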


\begin{mythm}
A Steiner system $S(2,\alpha, v)$ gives dress codes with parameters $\rho =\frac{(v-1)}{(\alpha-1)}$ and $n= \frac{\rho v}{\alpha}$ using lines as storage nodes and points as packets.
\end{mythm}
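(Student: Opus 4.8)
The plan is to exhibit the Steiner system directly as a fractional repetition code in the sense of Definition~\ref{defFRR}, and then invoke the standard outer MDS layer to upgrade it to a dress code. Recall that $S(2,\alpha,v)$ consists of a set of $v$ points together with a family of blocks (lines), each of size $\alpha$, such that every pair of points lies on exactly one line. First I would identify the ground set $\Omega$ with the $v$ points, so that $\theta = v$, and identify each storage node with a line. Since every line contains exactly $\alpha$ points, each node stores exactly $\alpha$ packets, which forces $d = \alpha$.

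The core of the argument is a pair of counting steps. To compute the replication degree, fix a point $p$ and count the pairs $\{p,q\}$ with $q \neq p$: there are exactly $v-1$ of them. Each line through $p$ contains $\alpha - 1$ points other than $p$ and hence accounts for exactly $\alpha - 1$ such pairs, and by the defining property every such pair lies on a unique line. Therefore the number of lines through $p$ equals $(v-1)/(\alpha-1)$, a value independent of the choice of $p$. This is precisely the claimed $\rho = (v-1)/(\alpha-1)$, and it simultaneously shows that every packet is replicated on exactly $\rho$ nodes, as required by the definition.

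Next I would pin down $n$ by double counting the incidences between points and lines. Summing over lines gives $n\alpha$ incidences, while summing over points gives $v\rho$; equating the two yields $n\alpha = v\rho$, that is, $n = \rho v / \alpha$. This is also consistent with Proposition~\ref{prop1}, since $\theta\rho = v\rho = n\alpha = nd$. At this stage every axiom of Definition~\ref{defFRR} has been verified: there are $n$ subsets of $\Omega$, each of size $d = \alpha$, with every element of $\Omega$ lying in exactly $\rho$ of them, so the collection is an FR code with the stated parameters.

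The only genuinely delicate point is the constancy of $\rho$ over all points, and this is exactly where the hypothesis $t=2$ is essential: the ``every pair on a unique line'' condition is what forces the pair count through a fixed point to come out to the same value for all points, so that the replication degree is well defined. Given this inner FR code, the dress code follows by concatenating it with an outer MDS code in the usual manner, as in \cite{rr10,dress11}, and no further argument is needed for that step.
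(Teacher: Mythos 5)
Your proof is correct, and it is essentially the argument the paper implicitly relies on: the paper states this theorem without proof, citing it as well known from \cite{rr10,dress11}, and your two counting steps (fixing a point $p$ and partitioning the $v-1$ pairs $\{p,q\}$ among the lines through $p$ to get $\rho=(v-1)/(\alpha-1)$, then double counting point--line incidences to get $n\alpha=v\rho$) constitute exactly the standard design-theoretic verification that an $S(2,\alpha,v)$ satisfies Definition~\ref{defFRR} with $\theta=v$, $d=\alpha$, consistent with Proposition~\ref{prop1}, after which the outer MDS layer upgrades the FR code to a dress code as in \cite{dress11}. The only blemish is notational: you appeal to ``the hypothesis $t=2$'' without having introduced $t$; you should say explicitly that this is the first parameter (the strength) of the Steiner system $S(2,\alpha,v)$, i.e., the condition that every \emph{pair} of points lies on a unique line.
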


In order to understand our work we now describe briefly the construction of FR codes based on regular  graph \cite{rr10}.

\begin{mydef} (Regular Graph) : A regular graph is a graph where each vertex has the same number of neighbors; i.e. every vertex has the same degree. A $d$-regular graph $R_{n,d}$ of $n$ vertices is the regular graph where all vertices have the same degree $d$. Thus graph $R_{n,d}$ has  $ \frac {nd}{2} $ vertices.
\end{mydef}
\newtheorem{myconst}{Construction} 
\begin{myconst} 
To construct a FR code with repetition degree $\rho = 2$ and with $nd$ even,  Algorithm \ref{alg1}  is well known  \cite{rr10}.
\begin{algorithm}
\caption{FR code construction with $\rho =2$ for $\{n,k,d\}$ DSS, where $nd$ is even}
\begin{enumerate} 
\item Generate a d-regular graph $R_{n,d}$ on n vertices $U_1 , U_2 , \ldots , U_n$. 
\item Index the edges of $R_{n,d}$ from 1 to $ \frac {nd}{2}  $. 
\item Store on node $U_i $ in the DSS the packets indexed by the edges that are adjacent to vertex $U_i$ in the graph.
\end{enumerate}
\label{alg1}
\end{algorithm}
\label{const1}
\end{myconst}
\begin{figure}[htb]
\begin{center}
\includegraphics[scale=0.1999999999]{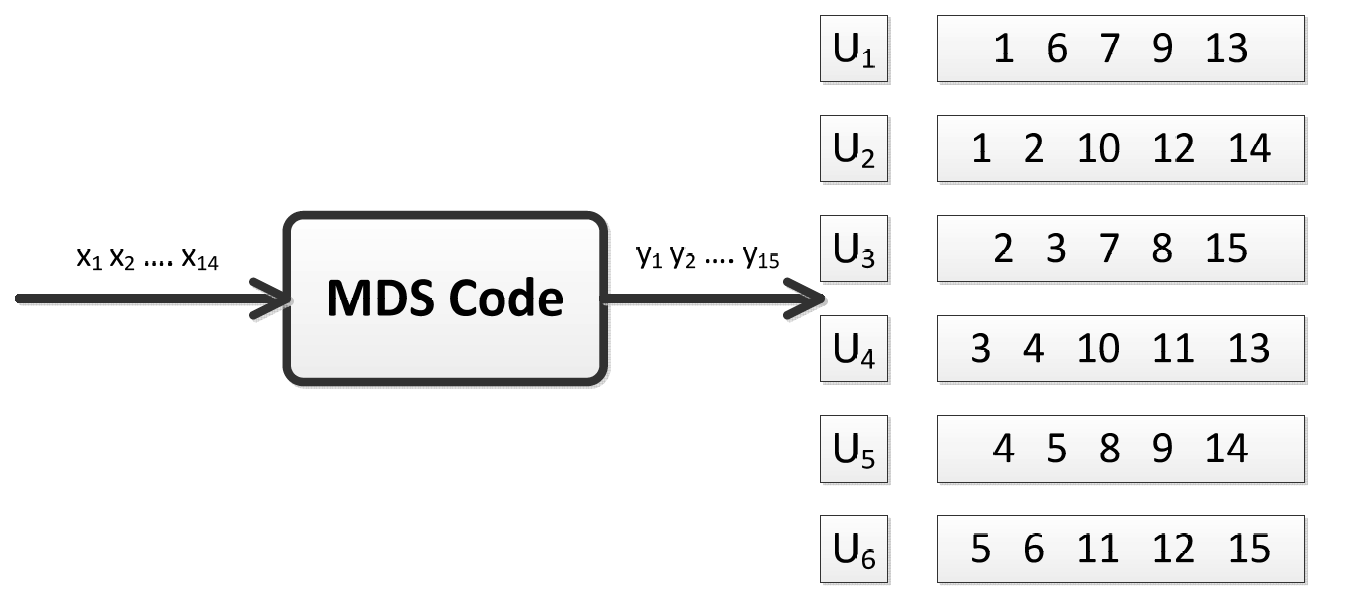}\\
\end{center}
\caption{An exact regenerating code for a $(6,4,5)$ DSS. The code is formed by a $(15, 14)$ parity check MDS code followed by special repetition code. This  inner repetition code is known as a $ Fractional ~Repetition ~Code $ of repetition degree $\rho = 2$.}
\label{f2}
\end{figure}
\begin{figure}[htb]
\begin{center}
\includegraphics[scale=0.40]{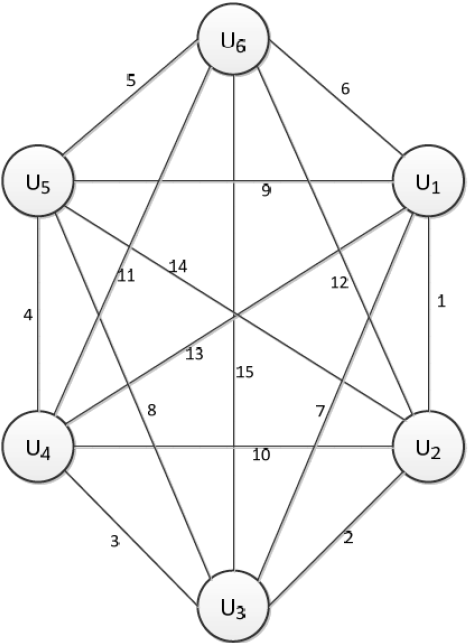}\\
\end{center}
\caption{Regular graph for $n = 6, d = 5$. The labeling of the edges from $1$ to $15$ give the FR code with $\rho = 2$ for DSS (6,4,5) depicted in Figure \ref{f2}.}
\label{f1}
\end{figure}
\newtheorem{myexm}{Example}
\begin {myexm} 
Consider a DSS $( n = 6, k = 4, d = 5 )$ whose storage capacity is $14$ according to (\ref{eq:5}). Let $\F_q$ be the finite field of size $q$. File $\x =(x_1, x_2,\ldots, x_{14}) ~\in ~ \F^{14}_q$ has $14$ packets, which we want to store on the system.  Figure \ref{f2} shows a regenerating code which can save the file of $14$ packets. Regenerating code in Figure \ref{f2} is made of two components- an outer MDS code $(15,14)$ and a repetition code based on a regular graph shown in Figure \ref{f1}. When we give file $\x$ as a input to the outer MDS code, we will get $ \y = (y_1, y_2,\ldots,y_{15})$ as output, where $y_1,y_2,\ldots y_{14}$ are same as $x_1,x_2,\dots,x_{14}$ and $y_{15}$ is the parity packet. The coded data packets will be saved on $6$ different nodes available in the system, according to inner repetition codes of Figure \ref{f2}, i.e. node $U_1$ stores $(y_1,y_6,y_7,y_9,y_{13})$, $U_2$ stores $(y_1,y_2,y_{10},y_{12},y_{14})$, $U_3$ stores $(y_2,y_3,y_7,y_8,y_{15})$, $U_4$ stores $(y_3,y_4,y_{10},y_{11},y_{13})$, $U_5$ stores $(y_4,y_5,y_8,y_9,y_{14})$ and $U_6$ stores $(y_5,y_6,y_{11},y_{12},y_{15})$.  The whole file can be retrieved by recovering just $14$ distinct packets out of these $15$, due to the property of outer MDS code. As $14$ distinct packets can be recovered by contacting to any $4$ nodes out of these $6$ nodes so whole file can be retrieved by contacting to any $4$ nodes. In case of failure of a single node, its data can be recovered by contacting exactly $5$ nodes and downloading $1$ packets from each of these nodes. For instance, when node $U_1$ fails, it contacts node $U_2,U_3,U_4,U_5 ~\mbox{and} ~ U_6$ to get the packets $y_2,y_7,y_{13},y_9 ~ \mbox{and} ~ y_6$ respectively. This example is similar to the example in \cite{rr10}.
\end {myexm}

\section {Modular construction of fractional repetition codes for $\rho = d$ and $n=\theta$}
In \cite{rr10}, El Rouayheb et al. have shown the construction of FR codes for repetition degree $3$ or more based on Stainer system. In this section we give a different approach to get the fractional repetition code for $\rho = d$ and  $n=\theta$. 
\begin{myprop}\label{prop2}
Let $n$ and $t (> 1)$ be a positive integers. For $0 \leq j \leq n-1$, let $C^n_j = \{ t^{i-1}+j \pmod {n+1} | 1 \leq i \leq \rho \}$, where $\rho$ is a positive integer such that $t^{\rho-1} < n$. Then $| C^n_j | = \rho$ and $\C = \{ C^n_j | 0 \leq j \leq n-1 \}$ is a FRC over $\Omega = \{1,2, \ldots, n \}$ with $\rho = d$ and $n=\theta$.
\end{myprop}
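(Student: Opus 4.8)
The plan is to establish the two assertions in turn: that each block is full, $|C^n_j| = \rho$, and that $\C$ meets the incidence requirement of Definition~\ref{defFRR}, i.e.\ every point of $\Omega$ lies in exactly $\rho$ blocks. Throughout I would view the blocks as the $n$ translates, by the shifts $j = 0, 1, \ldots, n-1$, of the single base block $B = \{1, t, t^2, \ldots, t^{\rho-1}\}$ reduced modulo $n+1$, working inside the cyclic group $\Z_{n+1}$ while keeping in mind the mismatch that there are only $n$ shifts but $n+1$ residue classes.

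For the cardinality, I would use that $t > 1$ makes the powers $t^0 < t^1 < \cdots < t^{\rho-1}$ strictly increasing, and that the hypothesis $t^{\rho-1} < n$ confines all of them to $\{1, 2, \ldots, n-1\}$, an interval of length strictly less than the modulus $n+1$. Consequently no two powers are congruent modulo $n+1$, so $|B| = \rho$; since each shift $x \mapsto x + j$ is a bijection of $\Z_{n+1}$, the distinctness is preserved and $|C^n_j| = \rho$ for all $j$.

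For the incidence count I would fix a point $m$ and enumerate the pairs $(i,j)$ with $1 \le i \le \rho$, $0 \le j \le n-1$ and $t^{i-1} + j \equiv m \pmod{n+1}$. For each fixed $i$ the congruence $j \equiv m - t^{i-1} \pmod{n+1}$ has a unique solution in $\Z_{n+1}$, and because the $\rho$ powers are pairwise distinct modulo $n+1$ these $\rho$ candidate shifts are themselves distinct. Hence $m$ would lie in exactly $\rho$ blocks if all $n+1$ residues were permitted as shifts. The delicate point is that the single residue $j \equiv n \equiv -1 \pmod{n+1}$ is excluded, so for index $i$ the candidate shift is inadmissible precisely when $m \equiv t^{i-1} - 1 \pmod{n+1}$; the argument must show that, for every $m \in \Omega = \{1, \ldots, n\}$, none of the $\rho$ candidate shifts lands on this excluded residue, so that the count stays exactly $\rho$.

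I expect this reconciliation of the modulus $n+1$ with the $n$-element ground set $\Omega$ to be the main obstacle. The lone discarded translate is $B - 1 = \{0,\, t-1,\, t^2-1,\, \ldots,\, t^{\rho-1}-1\}$, which contains the residue $0 \notin \Omega$ together with the $\rho-1$ points $t-1, \ldots, t^{\rho-1}-1$ that do lie in $\Omega$ as soon as $\rho \ge 2$; a naive translate count over $\Z_{n+1}$ therefore assigns multiplicity $\rho - 1$ to exactly those $\rho - 1$ points. Securing uniform multiplicity $\rho$ thus hinges on fixing the precise representative convention for ``$\pmod{n+1}$'' and the exact identification of the $n$ admissible shifts with the $n$ points of $\Omega$ — this is where I would concentrate the effort, and where one should check carefully whether the off-by-one forces a correction. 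Only once that bookkeeping is settled would the closing consistency check, via Proposition~\ref{prop1} with $\theta = n$ and $d = \rho$ (so $\theta\rho = nd$ becomes $n\rho = n\rho$), confirm that the $n\rho$ incidences distribute as $\rho$ per point.
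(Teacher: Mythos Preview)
Your plan is considerably more careful than the paper's own argument, which is a single sentence: it asserts $|C^n_j|=\rho$ ``by construction'' and then invokes only the global count $\theta\rho=nd$ from Proposition~\ref{prop1} (with $\theta=n$, $d=\rho$). That identity is necessary but not sufficient for an FR code---it neither forces each point to lie in exactly $\rho$ blocks nor guarantees that the blocks land inside $\Omega$.

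The off-by-one you flagged is not a bookkeeping nuisance; under the stated conventions it makes the proposition false. In the paper's own example ($n=8$, $t=2$, $\rho=3$) one computes $C^8_5=\{6,7,0\}$ and $C^8_7=\{8,0,2\}$ modulo $9$, so the residue $0\notin\Omega$ occurs, while exactly the points you predicted---$t-1=1$ and $t^2-1=3$---appear in only $\rho-1=2$ blocks. Your ``missing translate'' $B-1=\{0,\,t-1,\ldots,t^{\rho-1}-1\}$ explains this precisely. The construction becomes a genuine FR code if one works modulo $n$ (not $n+1$) with representatives $\{1,\ldots,n\}$: then all $n$ cyclic shifts of $B$ are used, every element of $\Z_n$ is covered exactly $|B|=\rho$ times, and the hypothesis $t^{\rho-1}<n$ still guarantees $|B|=\rho$. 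So your instinct to concentrate on the modulus mismatch was exactly right; the paper's one-line proof simply does not engage with the issue you uncovered.
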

\begin{proof} By the construction of $C^n_j$ it is clear that each such set will have size $\rho = d$ and since $\theta=n$, the necessary and sufficient condition for FRC (equation (\ref{eq:rh})) is satisfied.
\end{proof}
A fractional repetition code based on Proposition \ref{prop2} for $n = 8, t=2$ and $\rho = 3$ is given in Figure \ref{f5}.

\begin{figure}[htb]
\begin{center}
\includegraphics[scale=0.25]{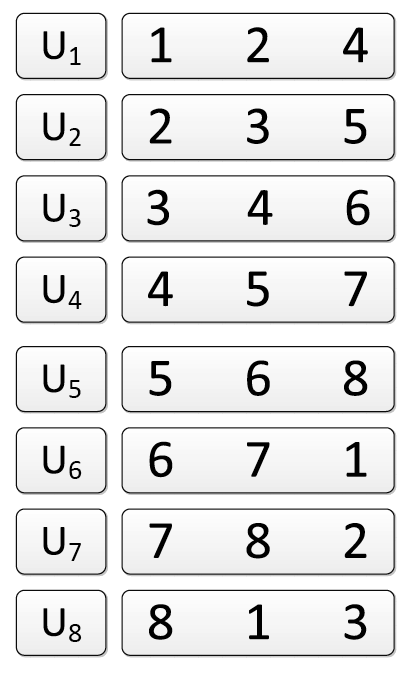}\\
\end{center}
\caption{Fractional repetition code for $n = 8, t=2$ and $\rho = 3$ based on the Proposition \ref{prop2}.}
\label{f5}
\end{figure}

 \hfill
 \hfill 
\section {Construction of Weak Fractional Repetition Codes for $\rho = 2$}\label{WFRro=2}
\begin{mydef}(Weak Fractional Repetition Code) : A Weak Fractional Repetition (WFR) code $\CC$ for a $(n,k,d)$ DSS, with repetition degree $\rho$ for every packet,  is a collection of $n$ subsets $U_1,U_2,\ldots,U_n$ of a set $\Omega = \{1,\ldots,\theta \}$ such that $|U_i| = d_i, ~\mbox{for} ~1 \leq i \leq n$, where $d = \max \{d_i \}, d-d_i = \delta_i$ is the order of weakness of node $U_i$ and $\sum\limits_{i=1}^n {|U_i|} = nd - \delta$, where $\delta =\sum\limits_{i=1}^n \delta_i$ is defined as total order of weakness of $\CC$.
\label{defWFR}
\end{mydef}
\newtheorem{remark}{Remark} 
\begin{remark} WFR code with weakness order $\delta = 0$ gives fractional repetition codes of Def. \ref{defFRR}. \end{remark}
Clearly, for WFR Codes, \begin {equation} \rho \theta = nd- \delta. \end {equation}
FR codes for $\rho = 2$ (WFR code with weakness order 0) has been constructed using regular graph by Ramachandran et al.\cite{rr10}.
Regular graph does not exist for cases where $n$ and $d$ both are odd.  Motivated by this we define a Partial Regular Graph.
\begin{mydef}
(Partial Regular Graph) : A graph $G(n,d)$ with total $n$ vertices such that the degree of some $(n-1)$ vertices is  $d$ and  one vertex has degree $d-1.$ We call these graphs as Partial Regular Graph denoted by $PRG(n,d)$.
\label {defprg}
\end{mydef}
\begin{remark} With the abuse of notations, we will use the same symbols for graph and it's adjacency matrix. \end{remark}
\subsection{Properties of Partial Regular Graph}
Let a partial regular graph be denoted by $PRG(n,d)$. The rows of its matrix are denoted by $R_i$ and columns are denoted by $C_i$ where $i \in \{1,2, \ldots, n\}$ such that degree of every vertex is $d$ except one vertex $U_k$ for some $k \in \{1,2, \ldots, n\}$  which has degree $d-1$. This adjacency matrix of $PRG(n,d)$ has the following properties. 
\begin{enumerate} 
\item $PRG(n,d)$ is a partial regular graph iff  both $n$ and $d$ are odd.
\item $wt(R_i) = d $ for all $i$ such that $1 \leq i  \leq n$ except for $i=k$ and  $wt(R_k) = d-1$.
\item $wt(C_i) = d $ for all $i$ such that $1 \leq i  \leq n$ except for $i=k$ and $wt(C_k) = d-1$. 
\end{enumerate}
So the total degree of the graph becomes even. In fact it is the necessary condition for the existence of an undirected graph.
\subsection { Construction of the Partial Regular Graph from a Circulant Regular Graph}
\begin{mydef}
(Circulant Graph) : A graph is called as a circulant graph $C_n(d)$ of degree $d$ if it's adjacency matrix is a circulant. i.e. ($a_{i,j}$) = ($a_{j-i}$) for $i,j = 0,\ldots,n-1$ (subscripts of adjacency matrix  are taken modulo $n$), $a_{i,j} \in \F_q$ and weight of each row is $d$.
\end{mydef}
Circulant matrix is characterized by it's $1^{st}$ row. If the first row of the circulant matrix is $(a_0,a_1,\ldots,a_{n-1})  \in \F^n_q$ one can form a polynomial $q(t) = a_0 + a_1t+ a_2 t^2 +\ldots  a_{n-1} t^{n-1}.$ The algebra of circulant matrix has one-to-one correspondance with polynomial algebra in $\F_q [t]/(t^n -1)$.
\begin{remark} Each circulant graph is a regular graph. \end{remark}
Let  
$
P(\pi) =  \left( \begin{array}{cccc} 
1 & 2 & \ldots & n \\
\pi(1) & \pi(2) & \ldots & \pi(n) \\
 \end{array} \right) 
 $
 be the permutation function on $n$ symbols then the corresponding permutation matrix $P_{n}$ is the $n\times n$ matrix whose entries are all $0$ except that in row $i$, the entry $\pi(i)$ equals $1.$ We know from elementary algebra that to get a permutation matrix $P_{n}$ from a permutation $P(\pi)$, one can apply permutation function $P(\pi)$ on the identity matrix $I_n$.
Algorithm $2$ constructs a Partial Regular Graph $PRG(n,d)$ from a circulant graph $C_n(d-1)$ whose $1^{st}$ row is defined by the polynomial $q(t) = t+ t^2 +\ldots+ t^{(d-1)/2}+ t^{n - (d-1)/2}+\ldots+t^{n-1} $, where weight of $q(t)=d-1$.  
\begin{algorithm}
\caption{Construction of Partial Regular Graph $PRG(n,d)$ from Circulant Matrix $C_n(d-1)$}
\begin{enumerate} 
\item Pick a circulant matrix $C_n(d-1)$ whose first row is defined by $q(t) = t+ t^2 +\ldots+ t^{(d-1)/2}+ t^{n - (d-1)/2}+\ldots+t^{n-1}$.
\item Get a permutation matrix $P_{n-1}$ whose permutation function $P(\pi)$ on $n-1$ symbols is given by
{\footnotesize
\[ \left( \begin{array}{ccccccc} 
1 & 2 & \ldots &\frac{(n-1)}{2} & \frac{(n+1)}{2}& \ldots & n-1\\
\frac{(n+1)}{2} & \frac{(n+1)}{2}+ 1 & \ldots & (n-1) & 1 & \ldots & \frac{(n-1)}{2}\\
 \end{array} \right). \]
 }
\item Expand matrix $P_{n-1}$ to make it $n \times n$ matrix $S_n$ 
by adding $ \bf 0 $ at $n^{th}$ row and  $n^{th}$ column such that 
\begin{math}
S_n =\left( \begin{array}{c|c} P_{n-1} & \bf {0^T} \\\hline \bf {0} & 0 \end{array} \right)
 \end{math}, where $ \bf {0} \rm = [0~ 0 \ldots ~0 ] \in \Z^{n-1}_2$.
\item Add above matrix $S_n$ to $C_n(d-1)$ to get the final matrix $PRG(n,d)$.
\begin {center}
$ PRG(n,d) = S_n + C_n(d-1) \pmod 2$.
\end{center}
\end{enumerate}
\label {alg2}
\end{algorithm}
\subsection {Construction of the WFR Code}
WFR code from the partial regular graph can be constructed in the same way as in the previous case of regular graph. Algorithm  \ref{alg2} can be used to construct a a partial regular graph which can be used together with Algorithm \ref{alg3}, to construct a WFR code.  
\begin{algorithm}
\caption{Construction of the Weak FR code from the Partial Regular Graph $PRG(n,d)$ where $n$ and $d$ both are odd.}
\begin{enumerate}
\item Generate the above graph $PRG(n,d)$ on n vertices $ U_1 , U_2 ,\ldots, U_n $. 
\item Index the edges of $G_{n,d}$ from 1 to $ \frac {nd-1}{2}.$
\item Store on node $U_i $ in the DSS the packets indexed by the edges that are adjacent to vertex $U_i$ in the graph.
\end{enumerate}
\label {alg3}
\end{algorithm}
\begin{figure}[htb]
\begin{center}
\includegraphics[scale=0.31]{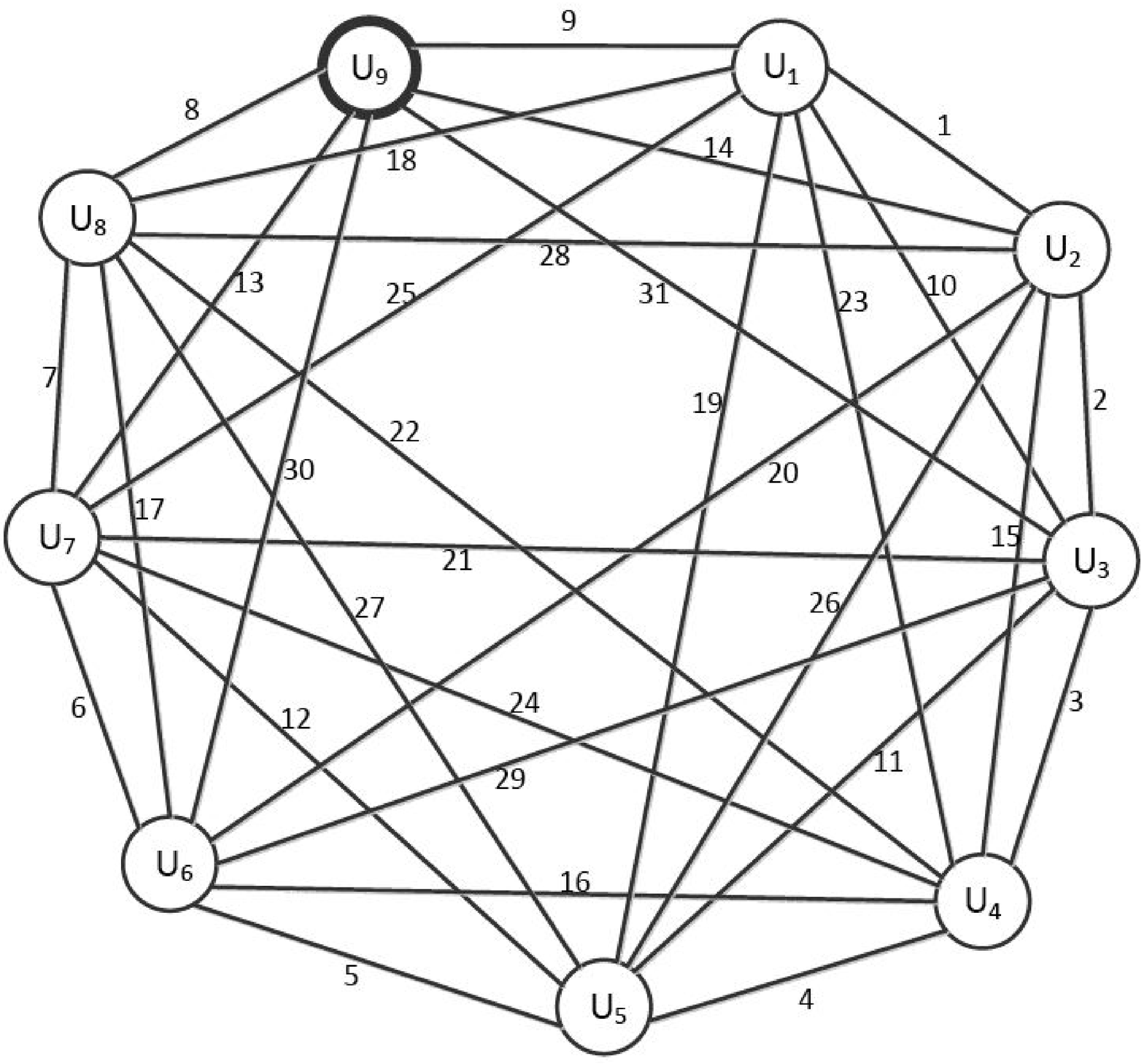}\\
\end{center}
\caption{Fraction Repetition code for $n = 9$ and $d = 7.$ Here All vertices (except $U_9$) are have degree $7$. Each vertex represents a node and edges represents vectors corresponding to the common symbol between the node.}
\label{f3}
\end{figure}
\begin{myexm}  To generate the Weak Fractional Repetition Code for $n=9$ and $d=7$, note that $q(t) = 0+t+ t^2 + t^3+ 0 t^4+0 t^5+t^6+t^7+t^8$ and hence
\[ 
C_9(6) = \left( \begin{array}{ccccccccc} 
 0  & 1  & 1  & 1 & 0  & 0  & 1  & 1  & 1  \\      
 1  & 0  & 1  & 1 & 1  & 0  & 0  & 1  & 1  \\
 1  & 1  & 0  & 1 & 1  & 1  & 0  & 0  & 1  \\ 
 1  & 1  & 1  & 0 & 1  & 1  & 1  & 0  & 0  \\ 
 0  & 1  & 1  & 1 & 0  & 1  & 1  & 1  & 0  \\ 
 0  & 0  & 1  & 1 & 1  & 0  & 1  & 1  & 1  \\
 1  & 0  & 0  & 1 & 1  & 1  & 0  & 1  & 1  \\ 
 1  & 1  & 0  & 0 & 1  & 1  & 1  & 0  & 1  \\ 
 1  & 1  & 1  & 0 & 0  & 1  & 1  & 1  & 0  \\ \end{array} \right).
 \] 
For the permutation function  
\[P(\pi) = \left( \begin{array}{cccccccc} 
1 & 2 & 3 & 4 & 5 & 6 & 7 & 8 \\
5 & 6 & 7 & 8 & 1 & 2 & 3 & 4 \\
 \end{array} \right)
 \]
the permutation matrix $P_8$ is given by
\[ 
P_8 = \left( \begin{array}{cccccccc} 
 0  & 0  & 0  & 0 & 1  & 0  & 0  & 0  \\ 
 0  & 0  & 0  & 0 & 0  & 1  & 0  & 0  \\ 
 0  & 0  & 0  & 0 & 0  & 0  & 1  & 0  \\ 
 0  & 0  & 0  & 0 & 0  & 0  & 0  & 1  \\ 
 1  & 0  & 0  & 0 & 0  & 0  & 0  & 0  \\ 
 0  & 1  & 0  & 0 & 0  & 0  & 0  & 0  \\
 0  & 0  & 1  & 0 & 0  & 0  & 0  & 0  \\ 
 0  & 0  & 0  & 1 & 0  & 0  & 0  & 0  \\
 \end{array} \right). 
 \]
 Thus by step 3 of Algorithm \ref{alg2}, $S_9$ is given by 
\[ S_9 = \left( \begin{array}
{ccccccccc} 
 0  & 0  & 0  & 0 & 1  & 0  & 0  & 0 &0 \\
 0  & 0  & 0  & 0 & 0  & 1  & 0  & 0 &0 \\
 0  & 0  & 0  & 0 & 0  & 0  & 1  & 0 &0 \\
 0  & 0  & 0  & 0 & 0  & 0  & 0  & 1 &0 \\
 1  & 0  & 0  & 0 & 0  & 0  & 0  & 0 &0 \\
 0  & 1  & 0  & 0 & 0  & 0  & 0  & 0 &0 \\ 
 0  & 0  & 1  & 0 & 0  & 0  & 0  & 0 &0 \\
 0  & 0  & 0  & 1 & 0  & 0  & 0  & 0 &0 \\
 0  & 0  & 0  & 0 & 0  & 0  & 0  & 0 &0 \\ 
\end{array} \right).
\] 
Using step $4$ of Algorithm \ref{alg2} one gets the following adjacency matrix for partial regular graph $PRG(9,7)$ with $n=9$ and $d=7$. 
\[ 
PRG(9,7) = \left( \begin{array}
{ccccccccc} 
 0  & 1  & 1  & 1 & 1  & 0  & 1  & 1  & 1  \\         
 1  & 0  & 1  & 1 & 1  & 1  & 0  & 1  & 1  \\
 1  & 1  & 0  & 1 & 1  & 1  & 1  & 0  & 1  \\  
 1  & 1  & 1  & 0 & 1  & 1  & 1  & 1  & 0  \\  
 1  & 1  & 1  & 1 & 0  & 1  & 1  & 1  & 0  \\  
 0  & 1  & 1  & 1 & 1  & 0  & 1  & 1  & 1  \\ 
 1  & 0  & 1  & 1 & 1  & 1  & 0  & 1  & 1  \\  
 1  & 1  & 0  & 1 & 1  & 1  & 1  & 0  & 1  \\  
 1  & 1  & 1  & 0 & 0  & 1  & 1  & 1  & 0  \\ 
\end{array} \right). 
\] 
Adjacency matrix  $PRG(9,7)$  will give us a graph shown in Figure ~\ref{f3}. It is a partial regular graph which satisfies the properties given in subsection $A.$ This graph will give us the Weak Fractional Repetition Code as shown in the Figure \ref{WFR3192PRG}.

\begin{figure}[htb]
\begin{center}
\includegraphics[scale=0.4]{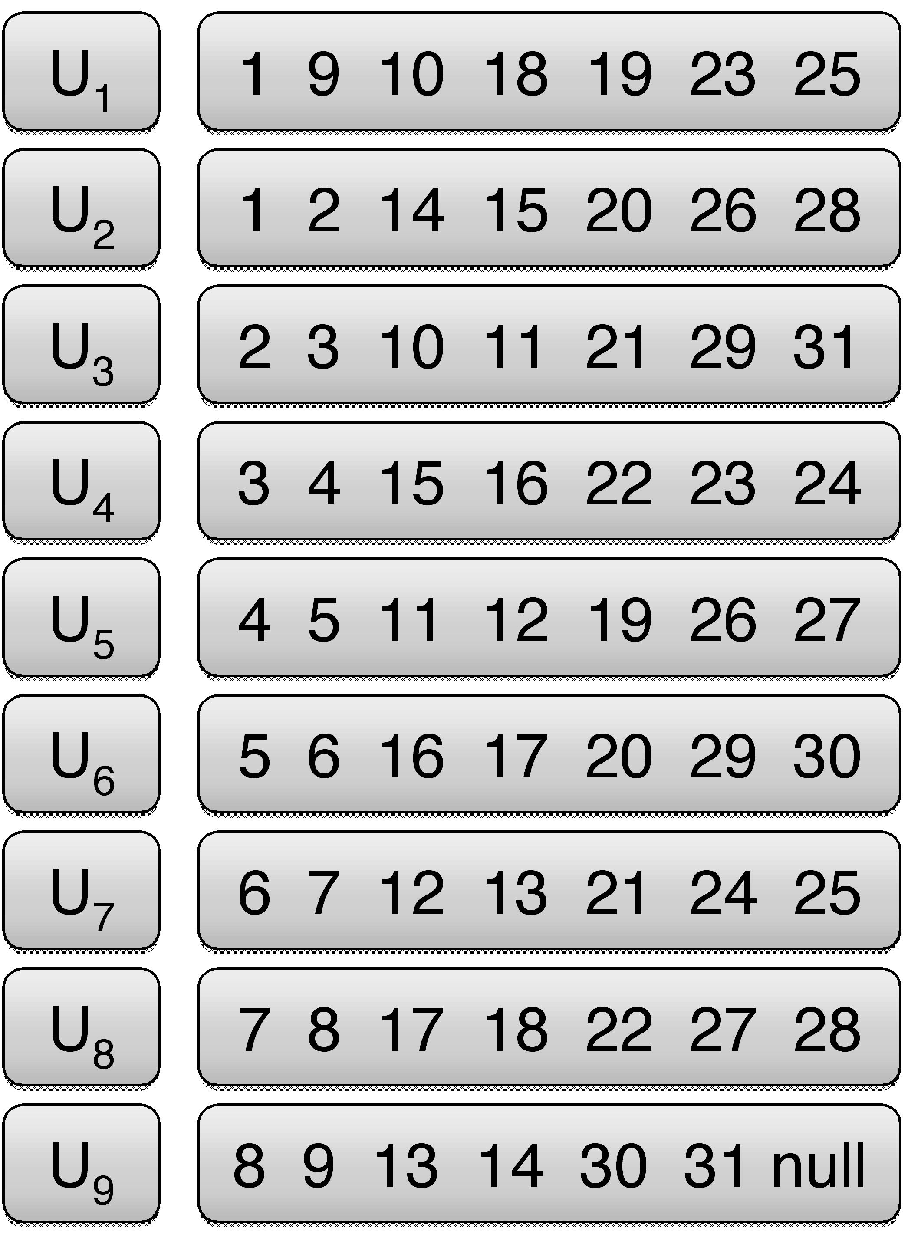}
\end{center}
\caption{Weak Fractional Repetition code based on generalized ring consturction for $\theta=31,\rho=2$ and $n=9$.}
\label{WFR3192PRG}
\end{figure}
In this example, suppose  we want to store a file $\x = (x_1,x_2,x_3, \ldots,x_{30})  \in \F^{30}_q$ on a DSS $(n=9,k=7,d=7)$.  There is no  regular graph which can give us appropriate fractional repetition codes. However Figure \ref{f3} shows a partial regular graph which gives us weak fractional repetition codes to save the file of $30$ packets on DSS $(n=9,k=7,d=7)$. Here regenerating code is made of two components - an outer MDS code $(31,30)$ and a weak fractional repetition code based on partial regular graph shown in Figure \ref{f3}. By taking file $\x$ as a input to the outer MDS code we will get $\y= (y_1,y_2, \ldots ,y_{31})$ as output where $y_1=x_1, y_2=x_2, \ldots ,y_{30}=x_{30}$ and $y_{31}$ is the parity packet. The coded packets will be saved on $9$ different nodes in the system according to partial regular graph shown in Figure \ref {f3}. The whole file can be retrieved by recovering just $30$ distinct packets out of these $31$, due to the property of outer MDS code. As $30$ distinct packets can be recovered by contacting to any $7$ nodes out of these $9$ nodes so whole file can be retrieved by contacting to any $7$ nodes. In case of failure of a single node, its data can be recovered by contacting exactly $7$ nodes and downloading $1$ packets from each of these nodes, except in case of node $U_9$ which has to contact only $6$ nodes in case of failure. For instance, when node $U_1$ fails, it contacts node $U_2,U_3,U_4,U_5,U_7,U_8,U_9$ to get the packets $y_1,y_{10},y_{23},y_{19},y_{25},y_{18},y_9$ respectively. Node $U_9$ saves only $6$ data packets. So we put a default null packet as the $7^{th}$ packet. To recover this node on failure, user needs to contact only $6$ other nodes.
\end{myexm} 
\section{Construction of Generalized Ring Code with $\rho=2$}
The construction of Section \ref{WFRro=2} can be generalized to give rise both strong and weak fractional repetition codes. Suppose we have $\theta$ packets from set $\Omega = \{1, 2,\ldots,\theta \}$ to be stored on $n$ nodes $U_1,U_2, \ldots ,U_n$ such that $\rho=2$. We first place $n$ nodes on a circle (See Figure \ref{constructionring} for $n=9$ and $\theta = 31$ packets) and then starting from first node start placing the packets between successive nodes until all packets are exhausted. For  $n=9$ and $\theta = 31$ this gives the WFR code as shown in Figure \ref{WFRC31.9}. Here every packet has $\rho=2$.  Since $\theta = q n + r, 0 \leq r \leq n-1$ this simple construction gives FR code for $r=0$ and WFR code for $r > 0$. This simple construction can be modified in different ways to obtained a general class of generalized ring codes. This will be included in the extended version of the paper.
I
\begin{figure}[htb]
\begin{center}
\includegraphics[scale=0.37]{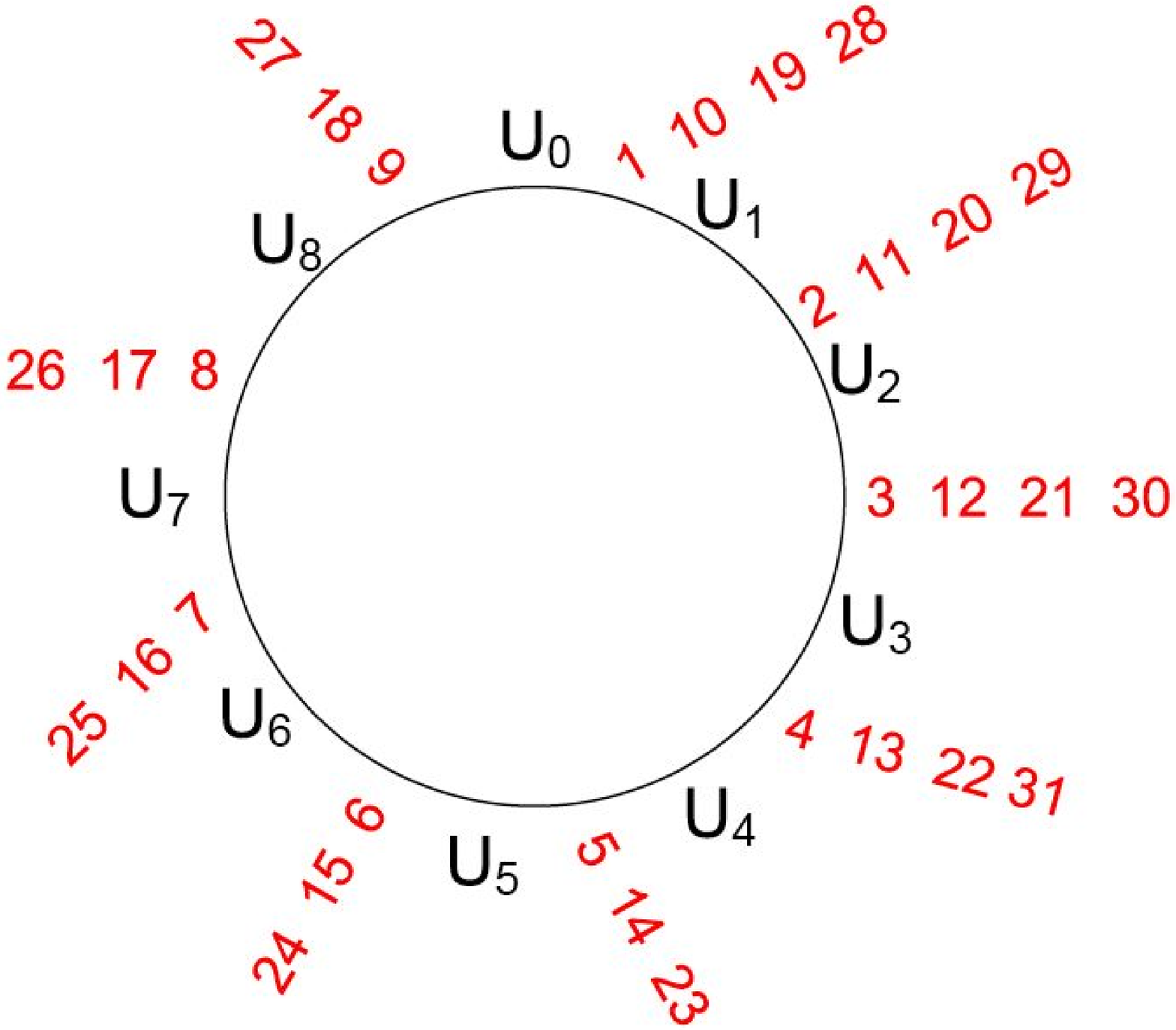}
\end{center}
\caption{Generalized ring construction  for $\theta=31,\rho=2$ and $n=9$.}
\label{constructionring}
\end{figure}

\begin{figure}[htb]
\begin{center}
\includegraphics[scale=0.41]{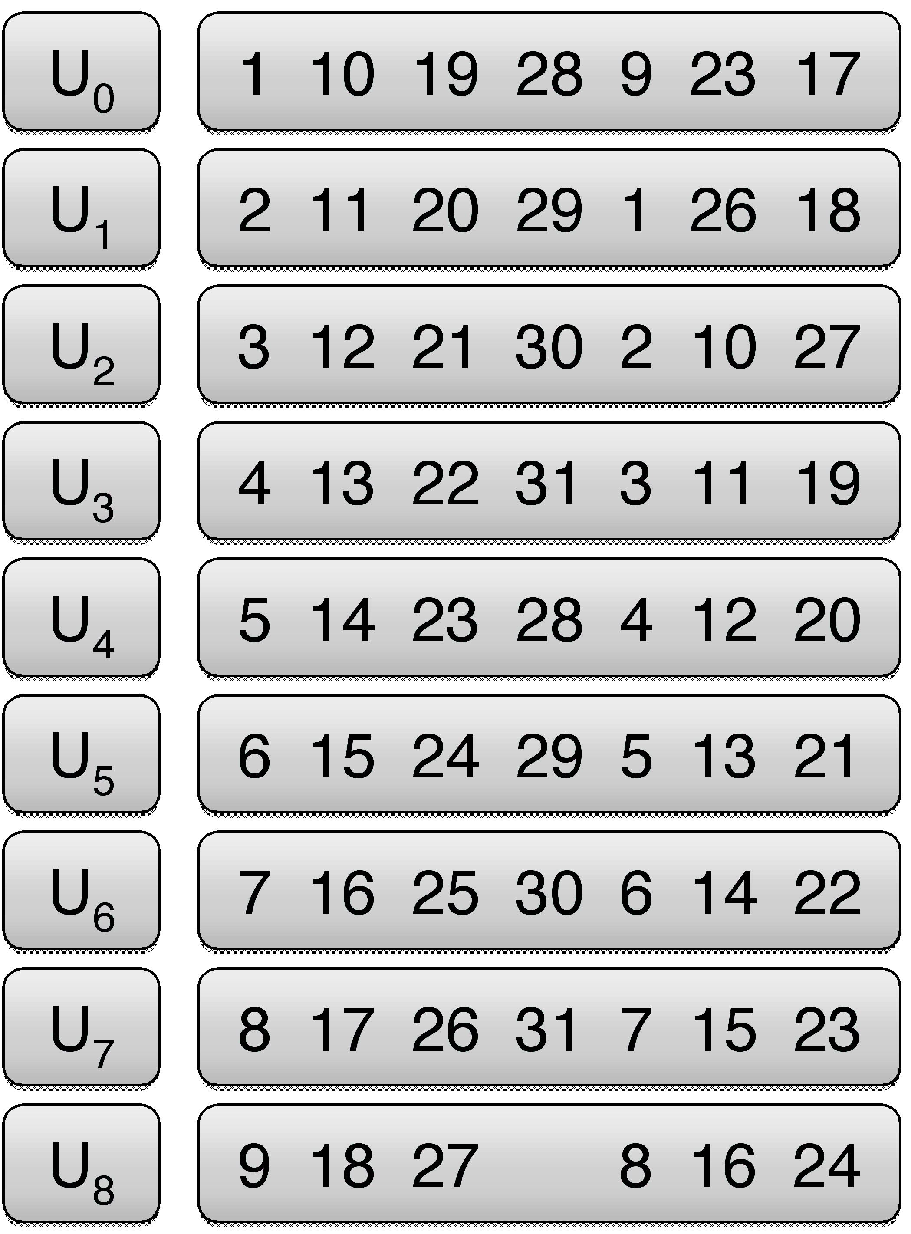}
\end{center}
\caption{Weak Fractional Repetition code based on generalized ring consturction for $\theta=31,\rho=2$ and $n=9$.}
\label{WFRC31.9}
\end{figure}
\section{Conclusion}
Motivated by the emergence of fractional repetition codes which were based on regular graphs, we presented weak fractional repetition codes which are based on partial regular graph. WFR code gives regenerating codes for system parameters for which regular graphs does not exist.  In this process, we constructed WFR codes of repetition degree $2$.  As FR code exists where $nd = \mbox{even} $ and we presented WFR codes for $nd = \mbox{odd}$, by combining these two, we get regenerating codes for single node failure for all feasible values of system parameters for any DSS. However WFR codes of higher repetition degree can be used for multiple node failure i.e. $\rho > 2$ as it gives more flexibility to a DSS. In future, weak fractional repetition codes could be constructed by Steiner system but distribution of default null packets remains an open problem. We also presented a simple modular construction of fractional repetition codes for $\rho = d$ and $n=\theta$ and a generalized ring construction that gives both strong and weak fractional repetition codes. It would be an interesting future task to generalize some of these constructions for $\rho > 2$. WFR codes could be useful for heterogeneous distributed storage systems \cite{DBLP:journals/corr/abs-1211-0415}.

\section*{Acknowledgment}
The authors would like to thank Suman Mitra for useful discussions and Ashish Jain, Jay Bhornia and Srijan Anil for drawing some figures. 

\bibliographystyle{IEEEtran}
\bibliography{cloud}
%

\end{document}